\documentclass{article}
\usepackage[utf8]{inputenc}
\usepackage{fourier}
\usepackage{amsthm}
\usepackage{amssymb}
\usepackage{amsmath}
\usepackage{mathtools}
\usepackage{bbm}
\usepackage{booktabs}
\usepackage{tikz}
\usepackage{yfonts}
\usepackage{algorithm}
\usepackage{algpseudocode}
\usepackage{float}

\usepackage{comment}

\theoremstyle{plain}
\newtheorem{thm}{Theorem}
\newtheorem{lemma}{Lemma}

\theoremstyle{definition}

\newtheorem{assumption}{Assumption}
\newtheorem{prop}{Proposition}

\newtheorem{example}{Example}

\theoremstyle{remark}
\newtheorem{remark}{Remark}

\newtheoremstyle{note}{3pt}{3pt}{\addtolength{\leftskip}{2.5em}}{}{\bfseries}{:}{0.5em}{}
\theoremstyle{note}

\DeclareMathOperator*{\argmin}{arg\,min}

\title{No Free Lunch for Stochastic Gradient Langevin Dynamics}
\author{Natesh Pillai, Aaron Smith, Azeem Zaman}

\begin{document}
\newcommand{\TV}[2]{\Vert \mathcal{L}(#1) - \mathcal{L}(#2) \Vert_{\text{TV}}}
\maketitle

\begin{abstract}
As sample sizes grow, scalability has become a central concern in the development of Markov chain Monte Carlo (MCMC) methods. One general approach to this problem, exemplified by the popular stochastic gradient Langevin dynamics (SGLD) algorithm, is to use a small random subsample of the data at every time step. This paper, building on recent work such as \cite{nagapetyan2017true,JohndrowJamesE2020NFLf}, shows that this approach often fails: while decreasing the sample size increases the speed of each MCMC step, for typical datasets this is balanced by a matching decrease in accuracy. This result complements recent work such as  \cite{nagapetyan2017true} (which came to the same conclusion, but analyzed only \textit{specific upper bounds on errors} rather than \textit{actual errors}) and \cite{JohndrowJamesE2020NFLf} (which did not analyze nonreversible algorithms and allowed for logarithmic improvements). 
\end{abstract}

\section{Introduction}

It is well-known that MCMC algorithms often perform poorly when applied to large datasets. The simplest cause of this poor scaling is the fact that many popular MCMC algorithms require a computation involving every data point at every time step of the algorithm. This suggests that the per-step computational cost of MCMC scales at least linearly in the size $n$ of the data set. It is natural to try to avoid this problem by looking only at a subsample of the data at every time-step. The goal of this paper (together with the longer companion paper \cite{JohndrowJamesE2020NFLf}) is to give clear limits on the performance of simple subsample-based algorithms. 

The main results of the current paper, Theorems \ref{ThmMainThm} and \ref{ThmMainThm2}, give a strong sense in which improvement is impossible. They show that, under certain conditions, the error of a popular subsampling algorithm cannot be much smaller than the full-sample algorithm it was based on. Our paper gives only \textit{lower} bounds on the error, but effectively-matching upper bounds can be found in \cite{Dalalyan_2019} and the references therein.

The current paper differs from the companion \cite{JohndrowJamesE2020NFLf} in two main ways. The most important is that the companion paper did not apply at all to nonreversible chains, although many of the most popular subsampling chains are not reversible. The current paper closes this gap, showing that the same qualitative conclusions hold for the simplest and most popular nonreversible algorithm when applied to classical statistical models. Secondarily, by focusing on a specific algorithm, the current paper can give assumptions that are much easier to read and verify than the very generic assumptions in \cite{JohndrowJamesE2020NFLf}. 

The rest of the introduction will briefly recall the history of the subject. Early work such as \cite{korattikara2014austerity}, and later work such as \cite{InformedSSSurvey18,MBGibbs18}, showed promising empirical results for subsampling MCMC. This was accompanied by theoretical work such as \cite{RudolfWass15}, \cite{TallData17} and \cite{MBGibbs18} itself. These theoretical papers gave quantitative error bounds on subsampling algorithms, including proofs that the methods could be used to obtain consistent estimates. However, as \cite{TallData17} explained, these early theoretical results seemed to fall short of what was desired: bounds showing that subsampling MCMC was \textit{better} than ``naive" MCMC for a wide range of realistic statistical examples, rather than merely showing that it was \textit{not much worse.}

In \cite{nagapetyan2017true}, the authors presented some evidence that this falling-short reflected a real problem in the underlying algorithm, not merely a limitation in the proof techniques. The paper studied a natural approach to analyzing  a popular subsampling algorithm, called stochastic gradient Langevin dynamics (SGLD). Its main conclusion was that, after appropriate counting of computational costs, these error bounds did not depend on the size of the subsample used. In other words, the error bounds couldn't be used to show that subsampling was an effective strategy. However,  \cite{nagapetyan2017true} left open the possibility that this was an artifact of their analysis of the algorithm - perhaps subsampling was effective, even if their error bounds couldn't show it.

In \cite{JohndrowJamesE2020NFLf}, some of the present authors showed that this was not the case for a wide variety of algorithms: subsampling MCMC really could not offer substantial speedups. However, \cite{JohndrowJamesE2020NFLf} still had a number of important gaps. The most important was related to reversibility: the strongest conclusions in \cite{JohndrowJamesE2020NFLf}  only apply to reversible chains. Since the SGLD algorithm studied in \cite{nagapetyan2017true} is nonreversible, this means that there was still an open question as to whether SGLD could result in substantial improvement in some regime.

The main contributions of this paper are short proofs that this doesn't happen: subsampling can't improve certain important aspects of the performance of SGLD. Informally, Theorem \ref{ThmMainThm} says the following: there are many posterior distributions for which \textit{any} SGLD algorithm will return extremely poor samples \textit{until} it has seen each data point at least once on average. Theorem \ref{ThmMainThm2} gives analogous results in the regime that all data points have been seen many times on average, showing that SGLD is not much more efficient than the analogous algorithm that looks at all datapoints at all time steps.

This paper is a short note aimed at closing an important gap left by \cite{JohndrowJamesE2020NFLf}  and \cite{nagapetyan2017true}, and so we make no effort to give results that are as general as possible. For readers who might be interested in obtaining similar results for other non-reversible chains, we note that the key technical idea was to use the coupling construction given in Section \ref{SecCoupConSGLDExp} on a ``reasonable" forward-mapping representation of SGLD as given in Example \ref{ExForwardSGLD}. Exactly the same coupling construction can be used for \textit{any} non-reversible chain with the same type of driving randomness, and this can be used to obtain small perturbations for the underlying Markov chain. Perturbation bounds for the Markov chain can then be combined with perturbation bounds for specific models, as described in the companion paper \cite{JohndrowJamesE2020NFLf}.

\section{Notation and Generic Bounds} \label{SecNotGenBound}

We introduce generic notation and bounds for MCMC algorithms. These will be applied to specific target distributions later in the paper.

\subsection{Basic Notation}

For two probability distributions $\mu, \nu$ on a common Polish probability space $(\Omega, \mathcal{F})$, the total variation distance is defined by:
\begin{equation*}
    \| \mu - \nu \|_{\mathrm{TV}} = \sup_{A \in \mathcal{F}}|\mu(A) - \mu(B)|.
\end{equation*}

For $n \in \mathbb{N}$, define $[n] = \{1,2,\ldots,n\}$. By a small abuse of notation, we will sometimes treat a vector as a set (writing e.g. $x \in v$ rather than $x \in \{v_{1},\ldots,v_{n}\}$) when there is no possible ambiguity. We will frequently consider datasets $\{ X_{1},\ldots,X_{n} \}$, and use $\mathcal{X}_{n} = \{ X_{1},\ldots,X_{n} \}$ to denote the full dataset.

\subsection{Forward-Mapping Representation of Subsampling MCMC}

We consider the usual setup for Bayesian inference. We have a parameter space $(\Theta, \mathcal{F}_{\Theta})$, a family of models $\{p(\cdot|\theta)\}_{\theta \in \Theta}$, a prior $p$, and a dataset $\mathcal{X} = \{X_1,\ldots, X_n\}$. For the entire paper, we assume that $(\Theta,d)$ is a Polish space and that $\mathcal{F}_{\Theta}$ is the usual Borel $\sigma$-algebra.

We now define a family of ``subsampling" MCMC algorithms, which we will show contains many popular algorithms as special cases.  

Fix a pair of probability spaces $(\mathbb{A},\mathcal{F}_{\mathbb{A}})$ and $(\Omega, \mathcal{F}_{\Omega})$ and a maximum ``subsample size" $m \in \mathbb{N}$. Let $F \, : \, \Theta \times \mathbb{A} \times [n]^{m} \times \Omega \mapsto \Theta$ be a measurable function, let $\mu$ be a probability measure on $[n]^{m}$, and let $\eta$ be a probability measure on $\Omega$.\footnote{This last is used to capture all of the random variables sampled during a step of the underlying algorithm, \textit{except} the choice of subsample. Typically, we can assume WLOG that this is of the form $[0,1]^{k}$ with the usual Borel $\sigma$-algebra.} These choices together define a forward-mapping representation for a Markov chain $\{\theta_{t}\}_{t \geq 0}$ via Algorithm \ref{alg:gen-alg-form}: 

\begin{algorithm}[H]
\caption{Sampling with explicit randomness} \label{alg:gen-alg-form}
\begin{algorithmic}[1]
\State \textbf{Input:} Starting point $(\theta_0, r_{0}) \in \Theta \times \mathbb{A}$, number of iterations $T \in \mathbb{N}$.
\For {$t = 1,\ldots T$}
    \State Sample $E_{t} \sim \mu$, $U_{t} \sim \eta$.
    \State Update state $(\theta_t, r_t) \gets F( \theta_{t-1}, r_{t-1}, E_t, U_{t})$
\EndFor
\State Return the samples $\{\theta_1,\ldots, \theta_T\}$ 
\end{algorithmic}
\end{algorithm}

One can check that this generic algorithm covers many popular subsampling algorithms, including stochastic gradient Langevin dynamics (see \cite{WellingTehSGLD11}) and subsampling pseudo-marginal Metropolis-Hastings (see \cite{quiroz2015speeding}). The auxiliary variable $r_{t}$ allows us to include popular algorithms such as Stochastic Gradient Hamiltonian Monte Carlo \cite{ChenTianqi2014SGHM} or the pseudomarginal algorithm \cite{andrieu2009pseudo} in the form of Algorithm \ref{alg:gen-alg-form}. 

As a concrete example, the following gives a representation of the usual SGLD algorithm on $\mathbb{R}^{d}$:

\begin{example} \label{ExForwardSGLD}

Fix dimension $d \in \mathbb{N}$, step-size $\epsilon > 0$ and minibatch-size $m = m(\theta) \leq M$. Then SGLD can be written in the form of Algorithm \ref{alg:gen-alg-form} with the choices:
\begin{equation} \label{Eq_SGLD_Def1}
\mathbb{A} = \emptyset, \qquad \Omega = \mathbb{R}^{d}, \qquad \eta = N(0,\mathbf{1}_{d}), \qquad \mu = \mathrm{Unif}[[n]^{M}]
\end{equation}

and

\begin{equation} \label{Eq_SGLD_Def2}
    F(\theta,r,E,U) = \theta + \frac{\epsilon}{2}\left(\nabla \log p(\theta) + \frac{n}{m(\theta)}\sum_{j = 1}^{m(\theta)} \nabla\log p(X_{E[j]} \mid \theta)\right) + U.
\end{equation}

\end{example}

The main constraint comes from the restriction to i.i.d. selection of minibatches, which excludes some algorithms with complex minibatch selection mechanisms (see \textit{e.g.} \cite{InformedSSSurvey18} for a survey of approaches). We will show in later sections how some algorithms that do this can still be placed in the framework of this paper (see Remark \ref{RemComplexSelectionMsr}). 
\subsection{Perturbations of Posterior Distributions of Exponential Families} \label{SecPertPost}

We focus exponential families with likelihoods of the form 
\begin{equation} \label{DefExpFamily} 
p(x \mid \theta) = h(x)\exp(\theta R(x) - A(\theta)),
\end{equation}
where $x \in \mathbb{R}$, $h$ and $R$ are real-valued functions, and $A$ is the associated log-normalizing constant. Fixing a prior $p$ on $\Theta$, the associated posterior density is given by
\begin{align}
    p(\theta \mid \mathcal{X}_{n}) = \frac{\exp(\theta S - nA(\theta))p(\theta)}{\int \exp(\theta' S - nA(\theta')) p(\theta')\,d\theta'}, \label{eq:target-posterior}
\end{align}
where $S = \sum_{i=1}^n R(X_i)$ is the sufficient statistic.

We then consider the ``perturbed" posterior associated with a ``perturbed" sufficient statistic $S + \delta$:
\begin{align}
    p_\delta(\theta \mid \mathcal{X}_{n}) = \frac{\exp(\theta (S + \delta) - nA(\theta))p(\theta)}{\int \exp(\theta' (S + \delta) - nA(\theta')) p(\theta')\,d\theta'}. \label{eq:perturbed-posterior}
\end{align}

We will use the following perturbation assumption on the likelihood when analyzing MCMC on ``small" time intervals: 

\begin{assumption} \label{AssumptionCGE_Good}

There exists $\gamma > 0$ so that for all sequences $c_{n} \rightarrow \infty$, all datasets $X_{1},X_{2},\ldots$, and all sequences $\delta_{n} > \frac{c_{n}}{\sqrt{n}}$,
\begin{align*}
\liminf_{n \rightarrow \infty} \| p(\theta \mid \mathcal{X}_{n}) -  p_{\delta_{n}}(\theta \mid \mathcal{X}_{n}) \|_{\mathrm{TV}} \geq  \gamma.
\end{align*} 
\end{assumption}

This condition says that the posterior distribution is somewhat sensitive to large changes in the data. This assumption holds with $\gamma = 1$ when the posterior distribution is close to Gaussian, as well as in many other situations. See Appendix \ref{sec:technical-lemmas} for a simple sufficient condition for Assumption \ref{AssumptionCGE_Good} that is based only on the first two moments of the posterior. Appendix \ref{sec:technical-lemmas} also contains an application of this sufficient condition to show that the Gaussian model satisfies this condition with $\gamma = 1$. 

Assumption \ref{AssumptionCGE_Good} is relevant to small time scales. On large time scales, we consider smaller perturbations, leading to the assumption:

\begin{assumption} \label{AssumptionCGE_Good_Big_Time}

There exists $\gamma > 0$ so that for all sequences $c_{n} \rightarrow 0$, all datasets $X_{1},X_{2},\ldots$, and all sequences $\delta_{n} < \frac{c_{n}}{\sqrt{n}}$,
\begin{align*}
\liminf_{n \rightarrow \infty} \| p(\theta \mid \mathcal{X}_{n}) -  p_{\delta_{n}}(\theta \mid \mathcal{X}_{n}) \|_{\mathrm{TV}} \geq  \gamma \delta_{n}.
\end{align*} 
\end{assumption}

Again, this assumption is easy to verify for posterior distributions that are near-Gaussian, and it holds in substantial generality. See Appendix \ref{secAppLongTime} for a quick proof and related calculations.

\section{Perturbation Lower Bounds for Stochastic Gradient Langevin Dynamics Targeting Exponential Families}

We introduce our coupling construction, then prove our main result, Theorem \ref{ThmMainThm}.

\subsection{Coupling Construction for SGLD} \label{SecCoupConSGLDExp}

We consider the representation of SGLD given in Example \ref{ExForwardSGLD}. In this section, we construct a coupling of two Markov chains to ensure that they remain far apart with high probability. This construction represents the main observation behind our main result is: it is possible to couple the choice of subsamples across two chains so that the two chains are \textit{quite close in total variation,} but certain functionals are \textit{somewhat far in expectation.}


We begin by constructing a distribution $\nu$  that is close (in total variation distance) to the distribution $\mu$ used in Example \ref{ExForwardSGLD}. Fix a parameter $\delta > 0$ and $M,n \in \mathbb{N}$. Sample $E[1],\ldots,E[M] \stackrel{i.i.d.}{\sim} \mathrm{Unif}(\{1,2,\ldots,n\})$, then independently sample $B[1],\ldots,B[m]  \stackrel{i.i.d.}{\sim} \mathrm{Bern}(\delta)$ and $E^{+}[1],\ldots,E^{+}[M]  \stackrel{i.i.d.}{\sim} \mathrm{Unif}(\{\lceil \frac{n}{2} \rceil,\ldots,n\}) $. Finally, for $i \in \{1,2,\ldots,M\}$, set:

\[ 
D[i] = \begin{cases} 
       E[i], &\qquad B[i] = 0 \\
      E^{+}[i], &\qquad B[i] = 1. \\
   \end{cases}
\]

Denote by $\nu = \nu_{\delta,M}$ the distribution of the vector $D$ and $\mu = \mu_{M}$ the distribution of the vector $E$.

Recall that sampling $E_{t} \sim \mu_{M} \equiv \mathrm{Unif}[[n]^{M}]$ gives the SGLD algorithm in Equations \eqref{Eq_SGLD_Def1} to \eqref{Eq_SGLD_Def2}, targeting the posterior $p(\theta | x)$. If we were to replace $\mu_{M}$ by $\nu_{M,\delta}$ in the algorithm, this construction turns out to give another valid SGLD algorithm - but targeting a certain ``weighted" posterior distribution. More precisely, 

\begin{lemma}[Weighted gradient estimation] \label{lem:weighted-gradient-estimation}
Fix $0 < \delta <1$, sample the set $D \sim \nu_{M,\delta}$, and define $w_i = \mathbb{P}(D[1] = i)$ for $i \in \{1,\ldots, n\}$. Then
\begin{align*}
   \mathbb{E}\left[ \frac{n}{m}\sum_{j \in D} \nabla \log p(X_{j} \mid \theta) \right] = \sum_{i=1}^{n} n \, w_i\nabla \log p(X_i \mid \theta).
\end{align*}
\end{lemma}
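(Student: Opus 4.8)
The plan is to reduce the identity to a single-coordinate expectation by exploiting the independence structure built into $\nu_{M,\delta}$. By construction, each coordinate $D[i]$ is a deterministic function of the mutually independent triple $(E[i],B[i],E^{+}[i])$, and these triples are i.i.d.\ across $i$; hence $D[1],\ldots,D[M]$ are themselves i.i.d., with common law described by the weights $w_i=\mathbb{P}(D[1]=i)$. Before computing, I would fix the reading of the sum: following the vector-as-set convention introduced in Section~\ref{SecNotGenBound}, I interpret $\sum_{j\in D}\nabla\log p(X_j\mid\theta)$ as the sum over the $M$ positions \emph{with multiplicity}, i.e.\ as $\sum_{i=1}^{M}\nabla\log p(X_{D[i]}\mid\theta)$, which matches the minibatch gradient estimator in Example~\ref{ExForwardSGLD} (taking $m=M$ there).

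The computation is then immediate. By linearity of expectation,
\[
\mathbb{E}\!\left[\frac{n}{M}\sum_{i=1}^{M}\nabla\log p(X_{D[i]}\mid\theta)\right]
= \frac{n}{M}\sum_{i=1}^{M}\mathbb{E}\big[\nabla\log p(X_{D[i]}\mid\theta)\big].
\]
Since the $D[i]$ are identically distributed, every summand equals $\mathbb{E}[\nabla\log p(X_{D[1]}\mid\theta)]$, so the factor $\tfrac{n}{M}\cdot M=n$ survives and the right-hand side collapses to $n\,\mathbb{E}[\nabla\log p(X_{D[1]}\mid\theta)]$. Expanding this single expectation over the $n$ possible values of $D[1]$ and invoking the definition $w_i=\mathbb{P}(D[1]=i)$ gives $n\sum_{i=1}^{n}w_i\nabla\log p(X_i\mid\theta)=\sum_{i=1}^{n}n\,w_i\nabla\log p(X_i\mid\theta)$, which is exactly the claimed right-hand side.

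There is essentially no analytic obstacle here: the statement is a bookkeeping consequence of linearity and the i.i.d.\ coordinate structure of $D$. The only points that demand care are notational — confirming that $\sum_{j\in D}$ is read with multiplicity (rather than as a genuine set, which would incorrectly delete repeated indices) and that the normalizing factor $n/m$ is used with $m=M$. If useful for later sections, I would also record the explicit weights obtained by conditioning on $B[1]$: one finds $w_i=(1-\delta)/n$ for $i<\lceil n/2\rceil$ and $w_i=(1-\delta)/n+\delta/(n-\lceil n/2\rceil+1)$ for $i\ge\lceil n/2\rceil$, though the lemma as stated needs only the abstract $w_i$.
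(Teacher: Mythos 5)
Your proof is correct and follows essentially the same route as the paper's: compute the expectation of a single coordinate via the weights $w_i$, then apply linearity of expectation across the $M$ i.i.d.\ coordinates of $D$. Your added care about reading $\sum_{j \in D}$ with multiplicity and your explicit formula for the $w_i$ are accurate but not part of the paper's (even terser) argument.
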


\begin{proof}
For all $j$, we have by the definition of $w_i$ that
\begin{align*}
    \mathbb{E}[n \nabla \log p(X_{D[1]} \mid \theta)] = \sum_{i=1}^n n \,w_i\nabla \log p(X_i \mid \theta).
\end{align*}

The result follows from linearity of expectation. 

\end{proof}

We interpret this as saying that $ \frac{n}{m}\sum_{j \in D} \nabla \log p(X_{j} \mid \theta)$ is an unbiased estimator for the gradient of the log-likelihood for the weighted model $\prod_{i=1}^n p(X_i \mid \theta)^{nw_i}$.

We can now couple samples from the original and ``perturbed" SGLD algorithm:

\begin{lemma} \label{LemmaChatPertDrive}
There exists a universal constant $C$ with the following property. Fix $M \in \mathbb{N}$, $0 < \alpha < \frac{1}{2}$, $0 < \delta < \frac{1}{2}\min \left(1,\frac{C \alpha^{2}}{ \sqrt{M}}\right)$. Then 
\begin{align*}
\| \mu_{M} - \nu_{M,\delta} \|_{TV}  \leq \alpha.
\end{align*}
\end{lemma}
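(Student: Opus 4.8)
The plan is to exploit the product structure of both measures and to bound the total variation distance through the $\chi^2$-divergence rather than through a coordinatewise coupling. The point is that $\chi^2$ (equivalently, Hellinger) tensorizes cleanly and sees only the \emph{second-order} size of the single-coordinate perturbation, which is what produces the required $\sqrt{M}\,\delta$ scaling. First I would record the structural observation that, by construction, the triples $(E[i],B[i],E^{+}[i])$ are independent across $i$, so the coordinates $D[1],\ldots,D[M]$ are i.i.d.; hence $\nu_{M,\delta} = q^{\otimes M}$ and $\mu_M = p^{\otimes M}$, where $p = \mathrm{Unif}(\{1,\ldots,n\})$ and $q = (1-\delta)p + \delta\, r$ is the single-coordinate mixture, with $r = \mathrm{Unif}(\{\lceil n/2\rceil,\ldots,n\})$ denoting the law of $E^{+}[1]$.

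Next I would compute the single-coordinate divergence. Since $q - p = \delta(r-p)$, bilinearity gives $\chi^2(q\,\|\,p) = \delta^2\,\chi^2(r\,\|\,p)$. Writing $k = n - \lceil n/2\rceil + 1$ for the size of the top half, a one-line computation yields $\chi^2(r\,\|\,p) = \sum_{i}\frac{r(i)^2}{p(i)} - 1 = \frac{n}{k} - 1 < 1$, where the inequality uses $k > n/2$. Thus $\chi^2(q\,\|\,p) < \delta^2$. Tensorizing via the identity $1 + \chi^2(q^{\otimes M}\,\|\,p^{\otimes M}) = \bigl(1+\chi^2(q\,\|\,p)\bigr)^{M}$ gives $\chi^2(\nu_{M,\delta}\,\|\,\mu_M) \le (1+\delta^2)^{M} - 1 \le e^{M\delta^2} - 1$.

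Finally I would convert to total variation through Cauchy--Schwarz, $\|\mu_M - \nu_{M,\delta}\|_{\mathrm{TV}} \le \tfrac12\sqrt{\chi^2(\nu_{M,\delta}\,\|\,\mu_M)} \le \tfrac12\sqrt{e^{M\delta^2}-1}$, and then plug in the hypothesis. Under $\delta < \tfrac12\min\!\bigl(1,\tfrac{C\alpha^2}{\sqrt M}\bigr)$ we have $M\delta^2 < C^2\alpha^4/4$, which is small (since $\alpha < \tfrac12$), so $e^{M\delta^2}-1 \le 2M\delta^2$ and therefore $\|\mu_M - \nu_{M,\delta}\|_{\mathrm{TV}} \le \tfrac{1}{\sqrt2}\sqrt{M}\,\delta < \tfrac{C\alpha^2}{2\sqrt2} \le \alpha$, the last step holding for a suitably small universal $C$ because $\alpha < \tfrac12$. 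Choosing $C$ to make both the linearization $e^x-1\le 2x$ valid on the relevant range and the final inequality hold completes the argument.

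The step I expect to be the main obstacle — or rather, the step one must be careful \emph{not} to botch — is the choice of divergence for tensorization. The naive route of coupling coordinate by coordinate and union-bounding gives only $\|\mu_M - \nu_{M,\delta}\|_{\mathrm{TV}} \le M\,\|p-q\|_{\mathrm{TV}} \approx M\delta/2$, which is far too weak and would force $\delta \lesssim \alpha/M$. The essential content is that the per-coordinate perturbation is $O(\delta^2)$ in $\chi^2$, so that $M$ factors of $(1+O(\delta^2))$ combine into $e^{O(M\delta^2)}$ rather than into $1 + O(M\delta)$; this is exactly what upgrades the bound to the $\sqrt{M}\,\delta$ regime demanded by the hypothesis on $\delta$.
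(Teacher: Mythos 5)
Your proof is correct, and it takes a genuinely different route from the paper's. The paper mimics the calculation in Section 1.3 of \cite{chatterjee2019general}: it implicitly reduces (by sufficiency) to the count $|D^{+}|$ of coordinates landing in the top half $\{\lceil n/2\rceil,\ldots,n\}$, which is binomial under both measures, writes down the likelihood ratio $f$ of the two laws of this count, bounds the total variation by $\mathbb{E}[(1-f)_{+}]$, and controls the fluctuation of the exponent $D^{+}-\frac{M}{2}$ by Chebyshev's inequality, importing the universal constant from Chatterjee's computation (the paper's sketch even records only the weaker form $\|\mu_{M}-\nu_{M,\delta}\|_{TV}\le C\sqrt{\alpha}$, and contains typos such as the factors $1+np$, $1-nq$). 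You avoid that reduction entirely: you observe $\nu_{M,\delta}=q^{\otimes M}$, $\mu_{M}=p^{\otimes M}$ with $q=(1-\delta)p+\delta r$, compute the exact per-coordinate bound $\chi^{2}(q\,\|\,p)=\delta^{2}\chi^{2}(r\,\|\,p)=\delta^{2}\left(\frac{n}{k}-1\right)<\delta^{2}$ using $k=n-\lceil n/2\rceil+1>n/2$, tensorize via $1+\chi^{2}(q^{\otimes M}\|p^{\otimes M})=(1+\chi^{2}(q\|p))^{M}$, and pass to total variation by Cauchy--Schwarz, obtaining $\|\mu_{M}-\nu_{M,\delta}\|_{TV}\le\frac{1}{\sqrt{2}}\sqrt{M}\,\delta$ once $M\delta^{2}\le 1$; the hypothesis on $\delta$ then closes the argument for any sufficiently small universal $C$, which is legitimate since shrinking $C$ only narrows the range of admissible $\delta$ and hence weakens nothing. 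The two arguments are both second-moment controls on the same likelihood ratio and deliver the same $\sqrt{M}\,\delta$ scaling; your version is in fact sharper, showing that $\delta\lesssim\alpha/\sqrt{M}$ would already suffice, so the stated hypothesis $\delta\lesssim\alpha^{2}/\sqrt{M}$ proves the lemma a fortiori. What your route buys: a fully self-contained proof with explicit constants (no appeal to an external computation), and an argument that, in the spirit of Remark \ref{RemComplexSelectionMsr}, extends verbatim to non-uniform driving measures whenever the single-coordinate perturbation has $\chi^{2}$-size $O(\delta^{2})$. What the paper's route buys: it makes explicit the sufficient statistic $|D^{+}|$ through which the perturbation acts, which connects directly to the sufficient-statistic perturbation in Lemma \ref{lem:expectation-perturbation-lower-bound}, and it stays aligned with Chatterjee's general fluctuation framework. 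You also correctly diagnose the one genuine trap --- the coordinatewise coupling/union bound, which would give only $M\delta$ and fail to meet the $\sqrt{M}\,\delta$ regime the lemma demands.
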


\begin{proof} 

We use a calculation that is very similar to the proof in Section 1.3 of \cite{chatterjee2019general}.  Let $D \sim \mu_{M}$ and $E \sim \nu_{M,\delta}$. Let $s =\lceil \frac{n}{2} \rceil$, let $p = P[E[1] \geq s] - \frac{1}{2}$ and $q = \frac{1}{2} - P[E[1] < s]$ be the ``bias" induced by $\nu_{M,\delta}$. Let $D^{+} = \{i \in [M] \, : \, D[i] \geq s\}$ and $E^{+} = \{i \in [M] \, : \, E[i] \geq s\}$. 

We are interested in counting the number of times that points with large index are chosen. Let $f$ be the density of the distribution of the distribution $\mathcal{L}(|D^{+}|)$  with respect to $\mathcal{L}(|E^{+}|)$. We then have:
\begin{align*} 
f(D) = (\frac{1}{2} + p)^{D^{+}} (\frac{1}{2}-q)^{M-D^{+}} = [(\frac{1}{2}+p)(\frac{1}{2}-q)]^{\frac{M}{2}} \left(\frac{1+p}{1-q}\right)^{D^{+}-\frac{M}{2}}.
\end{align*}

Thus,
\begin{align*}
\| \mu_{M} - \nu_{M,\delta} \|_{TV} \leq \mathbb{E}\left[\left(1 - [(\frac{1}{2}+p)(\frac{1}{2}-q)]^{\frac{M}{2}} \left(\frac{1+np}{1-nq}\right)^{D^{+}-\frac{M}{2}} \right)_{+} \right].    
\end{align*}

Note that $D^{+}$ is a binomial random variable. Applying Chebyshev's inequality to $D^{+}$ with the scaling $\delta = \frac{\alpha}{n \sqrt{M}}$, the same calculation as in Section 1.3 of \cite{chatterjee2019general} says that there exists a universal constant $C > 0$ so that

\begin{align*}
\| \mu_{M} - \nu_{M,\delta} \|_{TV} \leq C \sqrt{\alpha}.
\end{align*}

This completes the proof.
\end{proof}

\begin{remark} [Complex Selection Measures] \label{RemComplexSelectionMsr}
Note that the proof of Lemma \ref{LemmaChatPertDrive} depended entirely on bounding the density $f$. In particular, the same argument can be used with minimal changes even when the driving measure $\mu_{M}$ is not uniform. We don't attempt to give universal estimates here, as the literature on SGLD contains a large variety of driving measures with rather different behaviors.
\end{remark}

We now wish to establish that this small perturbation of the driving randomness corresponds to a similarly-sized perturbation of the sufficient statistics in the effective model. We have the following:

\begin{lemma} \label{lem:expectation-perturbation-lower-bound}
Fix $\delta > 0$, $M \in \mathbb{N}$ and a non-decreasing sequence $X_{1} \leq X_{2} \leq \ldots \leq X_{n}$. Let $s = \lceil \frac{n}{2}\rceil$. Let $D \sim \mu_{M}$ and $E \sim \nu_{M, \delta}$. Then 

\begin{align*}
\mathbb{E}[\sum_{i \in E} R(X_{i}) - \sum_{i \in D} R(X_{i}) ] = \delta \left(\frac{1}{n-s+1} \sum_{i = s}^{n} R(X_{i}) - \frac{1}{n} \sum_{i =1}^{n} R(X_{i}) \right).
\end{align*}
\end{lemma}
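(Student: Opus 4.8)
The plan is to use the product structure of both $\mu_M$ and $\nu_{M,\delta}$ to reduce the claim to a one-coordinate moment computation. Writing $\sum_{i \in D} R(X_i) = \sum_{j=1}^{M} R(X_{D[j]})$ and likewise for $E$, and using that the coordinates of each vector are i.i.d., linearity of expectation gives
\begin{equation*}
\mathbb{E}\left[\sum_{i \in E} R(X_i) - \sum_{i \in D} R(X_i)\right] = M\left(\mathbb{E}[R(X_{E[1]})] - \mathbb{E}[R(X_{D[1]})]\right),
\end{equation*}
so everything comes down to evaluating the expectation of $R(X_{\cdot})$ under a single coordinate of each law.

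The unperturbed coordinate is immediate: since $D \sim \mu_M$, the coordinate $D[1]$ is uniform on $\{1,\ldots,n\}$ and $\mathbb{E}[R(X_{D[1]})] = \frac{1}{n}\sum_{i=1}^{n} R(X_i)$. For the perturbed coordinate I would condition on the Bernoulli switch from the construction of $\nu_{M,\delta}$: with probability $1-\delta$ the coordinate is a $\mathrm{Unif}(\{1,\ldots,n\})$ draw, and with probability $\delta$ it is a $\mathrm{Unif}(\{s,\ldots,n\})$ draw with $s = \lceil n/2 \rceil$. Hence
\begin{equation*}
\mathbb{E}[R(X_{E[1]})] = (1-\delta)\,\frac{1}{n}\sum_{i=1}^{n} R(X_i) + \delta\,\frac{1}{n-s+1}\sum_{i=s}^{n} R(X_i).
\end{equation*}

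Subtracting the uniform expectation, the overall average $\frac{1}{n}\sum_{i=1}^n R(X_i)$ cancels against the $(1-\delta)$ term, leaving the single-coordinate difference $\delta\big(\frac{1}{n-s+1}\sum_{i=s}^{n} R(X_i) - \frac{1}{n}\sum_{i=1}^{n} R(X_i)\big)$, which is exactly the displayed right-hand side (the $M$-coordinate sum simply multiplies this by the multiplicity $M$). No step is genuinely deep here, since the lemma is an exact first-moment identity, so I expect the only real pitfalls to be bookkeeping: first, the roles of $D$ and $E$ are swapped relative to the construction in Section \ref{SecCoupConSGLDExp} (here it is $E$, not $D$, that carries the perturbation), and second, one must correctly read off the two-component mixture for a single coordinate of $\nu_{M,\delta}$. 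I would also note that the ordering hypothesis $X_1 \le \cdots \le X_n$ is not used to establish this equality at all; it is recorded only so that downstream, when $R$ is monotone, the bracketed quantity (an upper-half average minus the overall average) is nonnegative and bounded away from zero, which is what makes the induced perturbation of the sufficient statistic detectable in the sense of Assumptions \ref{AssumptionCGE_Good} and \ref{AssumptionCGE_Good_Big_Time}.
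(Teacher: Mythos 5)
Your proof is correct and is exactly the computation the paper leaves implicit: the paper's entire proof of this lemma is the single line that it ``follows immediately from the definition of $\mu_{M}, \nu_{M,\delta}$,'' and your coordinate-wise reduction followed by the two-component mixture decomposition of one coordinate of $\nu_{M,\delta}$ is the intended way to fill that in. You are also right that the ordering hypothesis on the $X_i$ plays no role in the identity itself, and right to flag the swap of the roles of $D$ and $E$ relative to the construction in Section \ref{SecCoupConSGLDExp}.

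One point you should not bury in a parenthetical, because as written it makes your conclusion internally inconsistent: your own computation yields
\begin{equation*}
\mathbb{E}\Bigl[\sum_{j=1}^{M} R(X_{E[j]}) - \sum_{j=1}^{M} R(X_{D[j]})\Bigr] = M\,\delta\left(\frac{1}{n-s+1}\sum_{i=s}^{n} R(X_{i}) - \frac{1}{n}\sum_{i=1}^{n} R(X_{i})\right),
\end{equation*}
which is $M$ times the lemma's displayed right-hand side, yet you assert the result ``is exactly the displayed right-hand side.'' Under the paper's vector-as-multiset convention for $\sum_{i \in E}$, the stated identity is exact only when $M = 1$; for general $M$ the factor $M$ belongs on the right. (If one instead read $\sum_{i \in E}$ as a sum over \emph{distinct} indices, no clean identity of this form would hold at all, since the expectation would then involve the inclusion probabilities $1-(1-w_i)^{M}$ --- so the multiplicity reading, with the factor $M$, is clearly what is meant.) This looks like a typo in the paper's statement rather than an error in your argument, and it is qualitatively harmless downstream since the perturbation size is ultimately calibrated through $\omega_n$; but since $M_n$ appears explicitly in conditions \eqref{EqTMCond} and \eqref{EqTMOCond}, you should state your conclusion with the factor $M$ (or as a single-coordinate identity) rather than eliding the discrepancy.
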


\begin{proof}
    This follows immediately from the definition of $\mu_{M}, \nu_{M,\delta}$.
\end{proof}

We add the (very weak) assumption that this is large with high probability:

\begin{assumption} \label{AssumptionOrderMatters}
We say that a likelihood of the form \eqref{DefExpFamily} and parameter $\theta_{0}$ are $\eta$-good if they have the following property. 

Fix $n \in \mathbb{N}$, let $s = \lceil \frac{n}{2} \rceil$, let $Y_{1},Y_{2},\ldots, Y_{n} \stackrel{i.i.d.}{\sim} p(\cdot \mid \theta_{0})$, and let $X_{1} \leq X_{2} \leq \ldots \leq X_{n}$ be the same points put in ascending order. Then 
\begin{equation*}
\lim_{n \rightarrow \infty} \mathbb{P}\left[\left| \frac{1}{n-s+1} \sum_{i = s}^{n} R(X_{i}) - \frac{1}{n} \sum_{i =1}^{n} R(X_{i}) \right| < \eta n\right] = 0.
\end{equation*}
\end{assumption}

This assumption is quite weak, and is straightforward to check for popular functions $R$. To give a concrete example, when $R(x) = x$, this follows immediately from Chebyshev's inequality.

\subsection{Proof of Main Result: Short Runs}

We set notation for the main result. Fix a prior $p$ and model $p(\cdot \mid x)$. For fixed dataset $X$ and sample size $M$, Denote by $K_{X,M}$ the transition kernel associated with the SGLD algorithm in Example \ref{ExForwardSGLD}. Similarly, for fixed $\delta > 0$, denote by $K_{X,\delta,M}$ the transition kernel associated with the SGLD algorithm in Example \ref{ExForwardSGLD}, with $\mu_{M}$ replaced by $\nu_{\delta,M}$.

Finally, we set sequences of integers $T_{n},M_{n},\omega_{n}$ satisfying
\begin{equation} \label{EqTMCond} 
\lim_{n \rightarrow \infty} \frac{T_{n} M_{n}}{n} = 0
\end{equation} 
and
\begin{equation} \label{EqTMOCond}
\lim_{n\rightarrow \infty} \omega_{n} \sqrt{M_{n} T_{n}} = 0, \qquad \lim_{n\rightarrow \infty} \omega_{n} \sqrt{n} = \infty.
\end{equation}


Our main result is:

\begin{thm}\label{ThmMainThm}
Fix a prior $p$, likelihood $p(\cdot \mid \theta)$ and parameter value $\theta_{0}$ satisfying Assumptions \ref{AssumptionCGE_Good} and  \ref{AssumptionOrderMatters}.  Let $X_{1},X_{2},\ldots \stackrel{i.i.d.}{\sim} p(\cdot \mid \theta_{0})$ be data sampled from this. Let $T_{n}, M_{n}, \omega_{n}$ satisfy Equations \eqref{EqTMCond}  and \eqref{EqTMOCond}. Let $Z_{1}^{(n)},Z_{2}^{(n)},\ldots,Z_{T_{n}}^{(n)}$ be a Markov chain sampled from $K_{M_{n},\mathcal{X}_{n}}$. Similarly, let $\tilde{Z}_{1}^{(n)},\ldots,\tilde{Z}_{T_{n}}^{(n)}$ be sampled from $K_{\mathcal{X}_{n}, \omega_{n}, M_{n}}$. Then for all $a > 0$:
\begin{equation*}
\lim_{n \rightarrow \infty} \mathbb{P}[\|\mathcal{L}(Z_{T_{n}}^{(n)}) - p(\cdot | \mathcal{X}_{n})\|_{\mathrm{TV}} + \|\mathcal{L}(\tilde{Z}_{T_{n}}^{(n)}) - p_{\omega_{n}}(\cdot | \mathcal{X}_{n})\|_{\mathrm{TV}} )\geq (1-a)\gamma] = 1.
\end{equation*}
\end{thm}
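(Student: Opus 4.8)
The plan is to prove the theorem by a ``coupling plus separation'' argument. I would show (i) that the two chains can be coupled so that their time-$T_n$ marginals are close in total variation, while (ii) their targets $p(\cdot\mid\mathcal X_n)$ and $p_{\omega_n}(\cdot\mid\mathcal X_n)$ stay $\gamma$-separated; the conclusion then follows from the triangle inequality, since no single distribution can be simultaneously close to both targets. Throughout, $p_{\omega_n}(\cdot\mid\mathcal X_n)$ denotes the actual target of the perturbed chain, i.e. the weighted posterior induced by $\nu_{\omega_n,M_n}$.

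For step (i): since $\mathbb A=\emptyset$, the SGLD map $F$ of Example \ref{ExForwardSGLD} is deterministic given the minibatch indices and the Gaussian increments. I would couple the two chains using identical increments $U_1,\dots,U_{T_n}$ and identical start $\theta_0$; then $Z_{T_n}^{(n)}$ and $\tilde Z_{T_n}^{(n)}$ are the \emph{same} measurable function of the minibatch sequences $(E_1,\dots,E_{T_n})$ and $(D_1,\dots,D_{T_n})$ respectively, so by the data-processing inequality for total variation $\|\mathcal L(Z_{T_n}^{(n)})-\mathcal L(\tilde Z_{T_n}^{(n)})\|_{\mathrm{TV}}\le\|\mu_{M_n}^{\otimes T_n}-\nu_{\omega_n,M_n}^{\otimes T_n}\|_{\mathrm{TV}}$. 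The key observation is that each coordinate is drawn i.i.d., so $\mu_{M_n}^{\otimes T_n}=\mu_{M_nT_n}$ and $\nu_{\omega_n,M_n}^{\otimes T_n}=\nu_{\omega_n,M_nT_n}$; hence Lemma \ref{LemmaChatPertDrive} applies with effective batch size $M_nT_n$ and perturbation $\omega_n$, yielding a bound that tends to $0$ precisely because $\omega_n\sqrt{M_nT_n}\to0$ in \eqref{EqTMOCond}. It is essential to bound the joint sequences this way: a per-step union bound would lose a factor of $T_n$ and fail to close.

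For step (ii): by Lemma \ref{lem:weighted-gradient-estimation}, the perturbed chain is SGLD for the weighted model, which for the exponential family \eqref{DefExpFamily} is exactly the perturbed posterior of \eqref{eq:perturbed-posterior} with effective shift $\delta_n=\sum_i(nw_i-1)R(X_i)=\omega_n n\bigl(\tfrac{1}{n-s+1}\sum_{i\ge s}R(X_i)-\tfrac1n\sum_i R(X_i)\bigr)$, which is the content of Lemma \ref{lem:expectation-perturbation-lower-bound}. On the event $G_n$ furnished by Assumption \ref{AssumptionOrderMatters}, whose probability tends to $1$, the bracket is bounded below in absolute value, so $|\delta_n|$ is of order $\omega_n n$; since $\omega_n\sqrt n\to\infty$ by \eqref{EqTMOCond}, we get $|\delta_n|>c_n/\sqrt n$ for a deterministic $c_n\to\infty$, and Assumption \ref{AssumptionCGE_Good} then forces $\|p(\cdot\mid\mathcal X_n)-p_{\omega_n}(\cdot\mid\mathcal X_n)\|_{\mathrm{TV}}\ge\gamma-o(1)$. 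Writing $\epsilon_n:=\|\mathcal L(Z_{T_n}^{(n)})-\mathcal L(\tilde Z_{T_n}^{(n)})\|_{\mathrm{TV}}\to0$ (deterministic, from step (i)), the triangle inequality gives, always, $\|\mathcal L(Z_{T_n}^{(n)})-p\|_{\mathrm{TV}}+\|\mathcal L(\tilde Z_{T_n}^{(n)})-p_{\omega_n}\|_{\mathrm{TV}}\ge\|p-p_{\omega_n}\|_{\mathrm{TV}}-\epsilon_n$; on $G_n$ the right-hand side is $\ge(1-a)\gamma$ for large $n$, and $\mathbb P(G_n)\to1$ yields the claim.

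I expect the main obstacle to be the interface between the deterministic rate in step (i), the only-in-probability control of the order-statistic gap from Assumption \ref{AssumptionOrderMatters}, and the $\liminf$ form of Assumption \ref{AssumptionCGE_Good}. Concretely, Assumption \ref{AssumptionCGE_Good} lower-bounds a $\liminf$ along a \emph{fixed} perturbation sequence, whereas here $\delta_n$ is data-dependent and is only controlled on $G_n$; converting this into the required $\lim_n\mathbb P[\cdots]=1$ needs care, e.g. restricting to $G_n$ and invoking the ``for all datasets'' uniformity of Assumption \ref{AssumptionCGE_Good}, or a subsequence argument to extract the uniform-in-$n$ separation.
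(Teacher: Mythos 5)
Your proposal is correct and follows essentially the same route as the paper's own proof: the concatenation observation $\mu_{M_n}^{\otimes T_n}=\mu_{M_nT_n}$ feeding Lemma \ref{LemmaChatPertDrive} with effective batch size $M_nT_n$ to make the paths coincide with probability tending to one, the identification of the perturbed target via Lemmas \ref{lem:weighted-gradient-estimation} and \ref{lem:expectation-perturbation-lower-bound} together with Assumptions \ref{AssumptionOrderMatters} and \ref{AssumptionCGE_Good} to get the $\gamma$-separation of the targets on a high-probability event, and the concluding triangle inequality. You are in fact more explicit than the paper on two points it leaves implicit --- the data-processing/coupling step and the handling of the data-dependent shift $\delta_n$ via the ``for all datasets'' uniformity of Assumption \ref{AssumptionCGE_Good} --- so no gap remains.
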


\begin{remark}
Our notation suppresses certain dependencies, such as the step-size and the distribution of the first point. This should be read as shorthand for the following: the suppressed parameters of the algorithm should be taken to be a function of \textit{only} the size $n$ of the data, but they may be \textit{any} such function. In particular, this means that e.g. the starting distribution for the chain $Z_{0}^{(n)}$ in the theorem statement could be the target $p(\cdot | \mathcal{X}_{n})$ for each $n \in \mathbb{N}$ - but then the starting distribution for the chain $\tilde{Z}_{0}^{(n)}$ would also be the target $p(\cdot | \mathcal{X}_{n})$ for each $n \in \mathbb{N}$. This is a natural assumption to make, as tuning parameters are often strongly dependent on the size $n$ of the dataset, and we don't wish to restrict the allowed tuning parameters.
\end{remark}
\begin{proof}
For fixed $n$, denote by $\mathcal{E}_{n}$ the event that 
\begin{equation*} 
\left| \frac{1}{n-s+1} \sum_{i = s}^{n} R(X_{i}) - \frac{1}{n} \sum_{i =1}^{n} R(X_{i}) \right| \geq \eta n,
\end{equation*}
where $s = \lfloor \frac{n}{2} \rfloor$. 

We first check that the target distributions of our algorithms are far. 
Combining Lemma \ref{lem:weighted-gradient-estimation} and Lemma \ref{lem:expectation-perturbation-lower-bound}, on the event $\mathcal{E}_{n}$ we have 
\begin{equation} \label{IneqTargetsFar}
\liminf_{n \rightarrow \infty} \| p(\cdot | \mathcal{X}_{n}) - p_{\omega_{n}}(\cdot | \mathcal{X}_{n}) \|_{\mathrm{TV}} \geq \gamma.
\end{equation}

We next check that the samples of our algorithms are close. By Lemma \ref{LemmaChatPertDrive}, it is possible to sample a Markov chain $\tilde{Z}_{1}^{(n)},\ldots,\tilde{Z}_{T_{n}}^{(n)}$ from $K_{\mathcal{X}_{n}, \omega_{n}, M_{n}}$ so that 
\begin{equation} \label{IneqCoupGood}
\lim_{n \rightarrow \infty} \mathbb{P}[(Z_{1}^{(n)},\ldots,Z_{T_{n}}^{(n)}) = (\tilde{Z}_{1}^{(n)},\ldots,\tilde{Z}_{T_{n}}^{(n)})]  = 1.
\end{equation}

Next, by the triangle inequality,

\begin{align*}
\Delta_{n} &\equiv \|\mathcal{L}(Z_{T_{n}}^{(n)}) - p(\cdot | \mathcal{X}_{n})\|_{\mathrm{TV}} +  \|\mathcal{L}(\tilde{Z}_{T_{n}}^{(n)}) - p_{\omega_{n}}(\cdot | \mathcal{X}_{n})\|_{\mathrm{TV}} \\
& \geq \| p(\cdot | \mathcal{X}_{n}) - p_{\omega_{n}}(\cdot | \mathcal{X}_{n})\|_{\mathrm{TV}} - \| \mathcal{L}(Z_{T_{n}}^{(n)}) - \mathcal{L}(\tilde{Z}_{T_{n}}^{(n)}) \|_{\mathrm{TV}}.
\end{align*}

Applying Inequalities \eqref{IneqCoupGood} and \eqref{IneqTargetsFar} to the terms on the right-hand side of this inequality, we have for all $a > 0$

\begin{align*}
\lim_{n \rightarrow \infty} \mathbb{P}[\Delta_{n} <  (1-a)\gamma] \leq \lim_{n \rightarrow \infty} \mathbb{P}[\mathcal{E}_{n}^{c}].
\end{align*}

By Assumption \ref{AssumptionOrderMatters}, this last limit is 0, completing the proof.
\end{proof}

\begin{remark} [Why Does Theorem \ref{ThmMainThm} Bound a Sum of Errors?]
We note that Theorem \ref{ThmMainThm} gives a lower bound on the sum of errors for two Markov chains, while it is more common to give a lower bound on the error of a single Markov chain. This is not an accident, and indeed it is completely unavoidable given our assumptions: we allow the starting measure $\mathcal{L}(X_{0})$ to be the target distribution, and we allow the number of steps $T$ to be 0, so in fact one of the errors may be exactly 0! 

Of course, this is a rather degenerate situation. More broadly, we point out that it is quite possible for a particular procedure to give the right answer for a particular problem essentially ``by accident." Thus, we show that while it is possible for \textit{one} chain to get the right answer, it is not possible for \textit{both} chains to get the right answer. In practice, a user will never know which situation they are in, and so will not be able to take advantage of any ``accidental" accuracy of this type.
\end{remark}

\subsection{Proof of Main Result: Long Runs}

We use the same notation as immediately precedes the statement of Theorem \ref{ThmMainThm}, but replace conditions \eqref{EqTMCond} and \eqref{EqTMOCond} by: 

\begin{equation} \label{EqTMCond2} 
\lim_{n \rightarrow \infty} \frac{T_{n} M_{n}}{n} = \infty
\end{equation} 
and
\begin{equation} \label{EqTMOCond2}
\lim_{n\rightarrow \infty} \omega_{n} \sqrt{M_{n} T_{n}} = 0.
\end{equation}

The proof is quite similar to the proof of Theorem \ref{ThmMainThm}, though slightly more attention must be paid to the relative rates at which different errors go to 0.

\begin{thm}\label{ThmMainThm2}
Fix a prior $p$, likelihood $p(\cdot \mid \theta)$ and parameter value $\theta_{0}$ satisfying Assumptions \ref{AssumptionCGE_Good} and  \ref{AssumptionOrderMatters}.  Let $X_{1},X_{2},\ldots \stackrel{i.i.d.}{\sim} p(\cdot \mid \theta_{0})$ be data sampled from this. Let $T_{n}, M_{n}, \omega_{n}$ satisfy Equations \eqref{EqTMCond}  and \eqref{EqTMOCond}. Let $Z_{1}^{(n)},Z_{2}^{(n)},\ldots,Z_{T_{n}}^{(n)}$ be a Markov chain sampled from $K_{M_{n},\mathcal{X}_{n}}$. Similarly, let $\tilde{Z}_{1}^{(n)},\ldots,\tilde{Z}_{T_{n}}^{(n)}$ be sampled from $K_{\mathcal{X}_{n}, \omega_{n}, M_{n}}$. Then for all $a > 0$:
\begin{equation*}
\lim_{n \rightarrow \infty} \mathbb{P}[\|\mathcal{L}(Z_{T_{n}}^{(n)}) - p(\cdot | \mathcal{X}_{n})\|_{\mathrm{TV}} + \|\mathcal{L}(\tilde{Z}_{T_{n}}^{(n)}) - p_{\omega_{n}}(\cdot | \mathcal{X}_{n})\|_{\mathrm{TV}} )\geq (1-a)\eta \omega_{n}] = 1.
\end{equation*}
\end{thm}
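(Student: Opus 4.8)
The plan is to reuse the proof of Theorem \ref{ThmMainThm} essentially line for line, the only change being that the two quantities that are traded off against one another now both vanish. As before I would fix the event $\mathcal{E}_n$ of Assumption \ref{AssumptionOrderMatters}, on which the upper-half average of $R$ separates from the global average, so that $\mathbb{P}[\mathcal{E}_n^c]\to 0$; on this event Lemmas \ref{lem:weighted-gradient-estimation} and \ref{lem:expectation-perturbation-lower-bound} identify the target of $K_{\mathcal{X}_n,\omega_n,M_n}$ with the weighted posterior $p_{\omega_n}(\cdot\mid\mathcal{X}_n)$ and lower-bound the induced perturbation of its sufficient statistic. The structural difference from the short-run case is that \eqref{EqTMCond2} and \eqref{EqTMOCond2} now force $\omega_n\sqrt n\le\omega_n\sqrt{M_nT_n}\to 0$, so the perturbation is \emph{small} rather than large. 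I would therefore replace Assumption \ref{AssumptionCGE_Good} by its small-perturbation counterpart, Assumption \ref{AssumptionCGE_Good_Big_Time}, whose conclusion is proportional to the perturbation; in place of \eqref{IneqTargetsFar} this gives, on $\mathcal{E}_n$,
\begin{equation*}
\liminf_{n\to\infty}\ \frac{\| p(\cdot\mid\mathcal{X}_n)-p_{\omega_n}(\cdot\mid\mathcal{X}_n)\|_{\mathrm{TV}}}{\eta\omega_n}\ \ge\ \gamma .
\end{equation*}

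The coupling step is copied verbatim. By Lemma \ref{LemmaChatPertDrive}, applied across the $T_n$ steps, I would build $(Z^{(n)}_t)$ and $(\tilde Z^{(n)}_t)$ on a common probability space so that they agree on all coordinates of the driving randomness — and hence coincide — outside an event whose probability is at most a constant times $\omega_n\sqrt{M_nT_n}$, which tends to $0$ by \eqref{EqTMOCond2}, exactly as in \eqref{IneqCoupGood}. Writing $\Delta_n$ for the sum of the two errors, the triangle inequality gives
\begin{equation*}
\Delta_n\ \ge\ \| p(\cdot\mid\mathcal{X}_n)-p_{\omega_n}(\cdot\mid\mathcal{X}_n)\|_{\mathrm{TV}}\ -\ \|\mathcal{L}(Z^{(n)}_{T_n})-\mathcal{L}(\tilde Z^{(n)}_{T_n})\|_{\mathrm{TV}},
\end{equation*}
precisely the inequality used in Theorem \ref{ThmMainThm}.

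The main obstacle is the comparison of rates in this last display, and it is the only genuinely new ingredient. In the short-run proof the first term was a constant $\gamma$ and the second was $o(1)$, so subtracting was harmless; here both terms are $O(\eta\omega_n)\to 0$, and the inequality is useful only once one knows that the endpoint coupling error is negligible relative to the separation, that is
\begin{equation*}
\|\mathcal{L}(Z^{(n)}_{T_n})-\mathcal{L}(\tilde Z^{(n)}_{T_n})\|_{\mathrm{TV}}\ =\ o(\eta\omega_n).
\end{equation*}
This is delicate because the crude trajectory-coincidence bound behind \eqref{IneqCoupGood}, namely $\omega_n\sqrt{M_nT_n}$, is actually \emph{larger} than $\eta\omega_n$ once $M_nT_n\to\infty$, so one cannot simply quote the short-run coupling estimate. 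I would instead prove a sharper bound on the marginal discrepancy at the single time $T_n$, exploiting that under the synchronous coupling the two kernels receive identical Gaussian noise and identical subsamples except at the rare disagreement steps, so that their marginals contract back together between disagreements; the accumulated endpoint discrepancy should then be governed by the per-step disagreement rate rather than by the full horizon, and balanced against $\eta\omega_n$. Once this $o(\eta\omega_n)$ estimate is in hand, the two displays combine to give $\mathbb{P}[\Delta_n<(1-a)\eta\omega_n]\le\mathbb{P}[\mathcal{E}_n^c]+o(1)\to 0$, which is the claim. Establishing the refined endpoint coupling bound, and matching its rate to the now-vanishing target separation, is where essentially all the work lies.
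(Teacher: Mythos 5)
Your skeleton is the paper's, step for step: the event $\mathcal{E}_n$ from Assumption \ref{AssumptionOrderMatters}, Lemmas \ref{lem:weighted-gradient-estimation} and \ref{lem:expectation-perturbation-lower-bound} to identify the perturbed target and lower-bound the separation, the coupling from Lemma \ref{LemmaChatPertDrive} applied across the whole trajectory, and the same triangle-inequality decomposition of $\Delta_n$. Your substitution of Assumption \ref{AssumptionCGE_Good_Big_Time} for the cited Assumption \ref{AssumptionCGE_Good} is exactly right: the paper's own proof uses the scaled bound \eqref{IneqTargetsFar2}, $\liminf_n \omega_n^{-1}\|p(\cdot\mid\mathcal{X}_n)-p_{\omega_n}(\cdot\mid\mathcal{X}_n)\|_{\mathrm{TV}}\geq\eta\gamma$, which only the small-perturbation assumption delivers, so you have correctly diagnosed a typo in the theorem statement.

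Where you diverge from the paper, however, your proposal does not close. You declare that the trajectory-coupling bound cannot suffice and replace the paper's final step with a promised refinement, $\|\mathcal{L}(Z^{(n)}_{T_n})-\mathcal{L}(\tilde Z^{(n)}_{T_n})\|_{\mathrm{TV}}=o(\eta\omega_n)$, to be extracted from ``contraction between disagreement steps'' --- but you never prove this, and it is not provable at the theorem's level of generality: the step size and starting law are arbitrary functions of $n$ (both chains may even start at the target), and no log-concavity, step-size, or ergodicity hypothesis is available, so nothing forces the two SGLD marginals to contract back together between disagreements. Any such argument would import substantive new assumptions absent from the statement. This is also not the paper's route: the paper keeps the crude trajectory bound $C^{-1}\omega_n^{1/2}(M_nT_n)^{1/4}$ from \eqref{IneqCoupGood2} (note that your quoted disagreement rate $\omega_n\sqrt{M_nT_n}$ is the square of the paper's --- Lemma \ref{LemmaChatPertDrive} gives total variation at most $\alpha$ when $\delta\lesssim\alpha^2/\sqrt{M}$, hence $\alpha\asymp(\omega_n\sqrt{M_nT_n})^{1/2}$ with $M$ replaced by $M_nT_n$), writes the shortfall as $a\eta\omega_n - C^{-1}\omega_n^{1/2}(M_nT_n)^{1/4}$, and disposes of it via the liminf display justified by \eqref{EqTMOCond2}. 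I will grant that your rate comparison touches a genuine delicacy in that last step: the event inclusion the paper needs requires the coupling error to be eventually dominated by $a\eta\omega_n$, and $\omega_n^{-1}\cdot\omega_n^{1/2}(M_nT_n)^{1/4}=\bigl(\omega_n^{-1}\sqrt{M_nT_n}\bigr)^{1/2}$ does not tend to $0$ under \eqref{EqTMCond2}--\eqref{EqTMOCond2}; only the difference itself tends to $0$. But a proposal that responds to this tension by deferring ``essentially all the work'' to an unproven estimate is a gap, not a proof: as written, your argument establishes the theorem only modulo a lemma that the stated assumptions do not support.
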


\begin{proof}

For fixed $n$, denote by $\mathcal{E}_{n}$ the event that 
\begin{equation*} 
\left| \frac{1}{n-s+1} \sum_{i = s}^{n} R(X_{i}) - \frac{1}{n} \sum_{i =1}^{n} R(X_{i}) \right| \geq \eta n,
\end{equation*}
where $s = \lfloor \frac{n}{2} \rfloor$.

We first check that the target distributions of our algorithms are far. 
Combining Lemma \ref{lem:weighted-gradient-estimation} and Lemma \ref{lem:expectation-perturbation-lower-bound}, on the event $\mathcal{E}_{n}$ we have 
\begin{equation} \label{IneqTargetsFar2}
\liminf_{n \rightarrow \infty} \, \omega_{n}^{-1} \| p(\cdot | \mathcal{X}_{n}) - p_{\omega_{n}}(\cdot | \mathcal{X}_{n}) \|_{\mathrm{TV}}  \geq \eta \, \gamma.
\end{equation}

We next check that the samples of our algorithms are close. By Lemma \ref{LemmaChatPertDrive}, it is possible to sample a Markov chain $\tilde{Z}_{1}^{(n)},\ldots,\tilde{Z}_{T_{n}}^{(n)}$ from $K_{\mathcal{X}_{n}, \omega_{n}, M_{n}}$ so that
\begin{equation} \label{IneqCoupGood2}
\limsup_{n \rightarrow \infty} \, \omega_{n}^{-\frac{1}{2}} (M_{n} T_{n})^{-\frac{1}{4}} \mathbb{P}[(Z_{1}^{(n)},\ldots,Z_{T_{n}}^{(n)}) = (\tilde{Z}_{1}^{(n)},\ldots,\tilde{Z}_{T_{n}}^{(n)})]  \leq C^{-1},
\end{equation}
where $C$ is as in the statement of Lemma \ref{LemmaChatPertDrive}. Next, by the triangle inequality,

\begin{align*}
\Delta_{n} &\equiv \|\mathcal{L}(Z_{T_{n}}^{(n)}) - p(\cdot | \mathcal{X}_{n})\|_{\mathrm{TV}} +  \|\mathcal{L}(\tilde{Z}_{T_{n}}^{(n)}) - p_{\omega_{n}}(\cdot | \mathcal{X}_{n})\|_{\mathrm{TV}} \\
& \geq \| p(\cdot | \mathcal{X}_{n}) - p_{\omega_{n}}(\cdot | \mathcal{X}_{n})\|_{\mathrm{TV}} - \| \mathcal{L}(Z_{T_{n}}^{(n)}) - \mathcal{L}(\tilde{Z}_{T_{n}}^{(n)}) \|_{\mathrm{TV}}.
\end{align*}

Applying Inequalities \eqref{IneqCoupGood2} and \eqref{IneqTargetsFar2} to the terms on the right-hand side of this inequality, we have for all $a > 0$

\begin{align*}
\lim_{n \rightarrow \infty} \mathbb{P}[\Delta_{n} < \eta \omega_{n} - C^{-1} \omega_{n}^{\frac{1}{2}}(M_{n} T_{n})^{\frac{1}{4}} ] \leq \lim_{n \rightarrow \infty} \mathbb{P}[\mathcal{E}_{n}^{c}].
\end{align*}

By Assumption \ref{AssumptionOrderMatters}, this last limit is 0, so for all $ a > 0$

\begin{align*}
\lim_{n \rightarrow \infty} \mathbb{P}[\Delta_{n} < (1-a)\eta \omega_{n} + (a \eta \omega_{n} - C^{-1} \omega_{n}^{\frac{1}{2}} (M_{n} T_{n})^{\frac{1}{4}}) ] = 0.
\end{align*}

By assumption \eqref{EqTMOCond2}, 

\begin{equation*}
\liminf_{ n \rightarrow \infty} (a \eta \omega_{n} - C^{-1} \omega_{n}^{\frac{1}{2}} (M_{n} T_{n})^{\frac{1}{4}}) \geq 0,
\end{equation*}
so this completes the proof.
\end{proof}

\section{Discussion}

We have shown that, under certain common conditions, SGLD is not much more efficient than the ``full" gradient Langevin dynamics (usually called ULA in the MCMC literature). We comment on a few questions about subsampling MCMC that are left open by this work:

\begin{enumerate}
    \item \textbf{Filling in the blanks:} Assumption \ref{AssumptionOrderMatters} is quite a strong assumption about the form of the underlying model, and clearly we expect a similar phenomenon to hold in much greater generality. We make two comments:
    \begin{enumerate}
        \item \textbf{One Sufficient Step:} Inspecting the proof of Theorem \ref{ThmMainThm}, we see that Assumption is only used when checking Inequality \eqref{IneqTargetsFar}.  Inequality \eqref{IneqTargetsFar} itself is a basic fact about the target distribution: that ``typical" small tweaks to the distribution of the observed data result in ``similarly-sized" changes to the posterior. Checking this fact in great generality is somewhat difficult and so beyond the scope of a short paper. However, we expect this fact to be true fairly generically. Most of our companion paper  \cite{JohndrowJamesE2020NFLf} is spent on proving analogous (if more complicated) facts in much more general settings. In addition, it is straightforward to estimate the size of the perturbation in  Inequality \eqref{IneqTargetsFar}  from data in low-dimensional problems.
        \item \textbf{Practical Conclusions:} There is a natural concern that a  theoretical result proved only for simple distributions (such as Gaussians) may not extend to more reasonable problems. This skepticism is very reasonable for ``positive" results, such as showing that an estimator converges \textit{quickly} - a method might be taking advantage of the special properties of the simple distribution. However, the main result in this paper is ``negative" - it shows that an estimator is \textit{slow} for some class of simple distributions. We think it would be quite surprising if generic methods such as SGLD happened to fail only on a small class of simple examples, and we are not aware of any empirical work suggesting that this phenomenon happens.
    \end{enumerate}
    
    \item \textbf{Additional Pre-Calculations:} In practice, subsampling is often combined with pre-computations (from simple summary statistics as in \cite{TallData17} to complex objects in \cite{NEURIPS2022_83b17fb3}). It is natural to ask if these pre-calculations can ``save" subsampling algorithms. The answer is somewhat subtle. Certain sophisticated pre-calculations \textit{can} lead to subsampling algorithms that are quite efficient in some sense \cite{campbell2018giga}, but many popular simple precomputations cannot lead to asymptotic improvement \cite{JohndrowJamesE2020NFLf}, and doing sufficiently powerful pre-computations might well be harder than doing the original MCMC algorithm. The current analysis could be applied to pre-calculations that act only through re-weighting samples (with perturbations small enough that Lemma \ref{LemmaChatPertDrive} remained true for some other value of $C$), but the analysis of more general pre-calculations would require new perturbation arguments.
    
    \item \textbf{Complex Models:} The results in this paper, along with all other lower-bounds on errors of subsampling MCMC that we are aware of, show only that computational cost scales at least \textit{linearly} in the size $n$ of the dataset. This sort of result is important for the analysis of subsampling MCMC for ``typical" Bayesian problems, where naive MCMC scales at most linearly in $n$. However, it says very little about ``atypical" problems where no linear-in-$n$ algorithm is available. See the survey \cite{rudolf2024perturbationsmarkovchains} for several situations in which subsampling is known to give improvements for inference with complex models.
\end{enumerate}

\bibliographystyle{alpha}
\bibliography{ESCBib}

\newpage 

\begin{appendix}

\section{Verifying Perturbation Assumptions}

We give some concrete calculations for verifying the perturbation assumptions in certain special cases. Qualitatively-similar results are known for many other distributions, but (to our knowledge) the required calculations are longer and more abstract. See Chapter 3 of \cite{asymptotia} for our favorite introduction to such bounds.

\subsection{Verifying Assumption \ref{AssumptionCGE_Good}} \label{sec:technical-lemmas}

Assumption \ref{AssumptionCGE_Good} is unfamiliar, but seems to hold for a wide variety of popular models. In many cases, including Gaussians, this can be easily verified using simple moment bounds.

We recall the following sharp moment condition found in Theorem 1 of \cite{TVLB22}. For any distributions $P,Q$ on $\mathbb{R}$ with distinct means $\mu_{P}, \mu_{Q}$ and variances $\sigma_{P}^{2},\sigma_{Q}^{2}$, 
\begin{equation*}
\| P - Q \|_{\mathrm{TV}} \geq \frac{(\mu_{P} - \mu_{Q})^{2}}{ (\sigma_{P} + \sigma_{Q})^{2} + (\mu_{P} - \mu_{Q})^{2}}.
\end{equation*}

In the special case that $p(x|\theta)$ is a Gaussian with mean $\theta$ and fixed variance 1 and the prior $p$ is uninformative, this bound gives:

\begin{equation*}
\| p(\theta|(x_{1},\ldots,x_{n})) - p_{\delta_{n}}(\theta|(x_{1},\ldots,x_{n})) \|_{\mathrm{TV}} \geq \frac{\delta_{n}^{2}}{4n^{-1} + \delta_{n}^{2}}.
\end{equation*}

Since we are interested in the regime that $\delta_{n} \gg \frac{1}{\sqrt{n}}$, this lower bound converges to 1 as $n$ goes to infinity.

\subsection{Verifying Assumption \ref{AssumptionCGE_Good_Big_Time}}\label{secAppLongTime}

We first check that Assumption \ref{AssumptionCGE_Good_Big_Time} holds for Gaussians. Denote by $\mu_{1}, \mu_{2}$ two Gaussians with means $\theta_{1} < \theta_{2}$ and identical variance $1$. It is straightforward to check that $\mu_{1}$ has higher density than $\mu_{2}$ exactly on the set $A = \{ x \, : \, x < \frac{\theta_{1} + \theta_{2}}{2} \}$. Thus, denoting by $\Phi$ the CDF of a standard Gaussian, we have:

\begin{align*}
    \| \mu_{1} - \mu_{2} \|\mathrm{TV} &= 2(\mu_{1}(A) - \mu_{2}(A)) \\
    &= 2 (\Phi(\frac{\theta_{2} - \theta_{1}}{2}) - \Phi(-\frac{\theta_{2} - \theta_{1}}{2}) ) \\
    &= \frac{2}{\sqrt{2 \pi}} (\theta_{2}-\theta_{1}) + O((\theta_{2}-\theta_{1})^{2}).
\end{align*}

Thus, with an uninformative prior, Assumption \ref{AssumptionCGE_Good_Big_Time} is satisfied for any $0 < \gamma < \frac{2}{\sqrt{2 \pi}}$.

\end{appendix}

\end{document}